\title{On the reaction time of some synchronous systems}
\author{Ilias Garnier, Christophe Aussagu\`es, Vincent David
\institute{CEA, LIST, Embedded Real Time Systems Laboratory\\ Point Courrier 94, Gif-sur-Yvette, F-91191 France}
\email{Firstname.Lastname@cea.fr}
\and
Guy Vidal-Naquet
\institute{SUPELEC Systems Sciences (E3S) \\ Computer Science Department \\
91192 Gif-sur-Yvette Cedex, France}
\email{Guy.Vidal-Naquet@supelec.fr}
}
\newtheorem{definition}{Definition}
\newtheorem{theorem}{Theorem}
\newtheorem{lem}[theorem]{Lemma}
\begin{document}
\maketitle

\newcommand{\redt}[1]{\textcolor{red}{#1}}
\newcommand{\greent}[1]{\textcolor{green}{#1}}

\newcommand{\agents}{\mathcal{A}}
\newcommand{\aindex}{\mathcal{A}_{ind}}
\newcommand{\inputs}{\mathcal{I}}
\newcommand{\multiset}[1]{\mathcal{B}(#1)}
\newcommand{\bfin}[1]{\mathcal{B}_{fin}(#1)}
\newcommand{\bottom}{\perp}
\newcommand{\real}{\mathbb{R}}
\newcommand{\boolset}{\mathbb{B}}
\newcommand{\natset}{\mathbb{N}}
\newcommand{\fin}[1]{\natset^{<#1}}
\newcommand{\singletonset}{\mathbbm{1}}
\newcommand{\realplus}{\mathbb{R}^+}
\renewcommand{\time}{\mathbb{T}}
\newcommand{\prd}[1]{\prod_{#1}}
\newcommand{\coprd}[1]{\coprod_{#1}}
\renewcommand{\vec}[1]{\overrightarrow{#1}}
\newcommand{\veclt}{<:}
\newcommand{\vecgt}{:>}
\newcommand{\inl}{\textbf{inl}~}
\newcommand{\inr}{\textbf{inr}~}
\newcommand{\matchwith}{\textbf{case}}

\newcommand{\pletin}[2]{\textbf{let}\ #1\ =\ #2\ \textbf{in}}
\newcommand{\plet}[1]{\textbf{let}\ #1\ =}
\newcommand{\pin}{\textbf{in}\ }
\newcommand{\ifthen}[1]{\textbf{if}\ #1\ \textbf{then}}
\newcommand{\pfin}[1]{\wp_{fin}(#1)}

\newcommand{\inttype}{\textbf{int}}
\newcommand{\booltype}{\textbf{bool}}
\newcommand{\commtype}{\textbf{comm}}
\newcommand{\vartype}[1]{\textbf{var($#1$)}}
\newcommand{\exptype}[1]{\textbf{exp($#1$)}}
\newcommand{\intvartype}{\textbf{var(int)}}
\newcommand{\boolvartype}{\textbf{var(bool)}}
\newcommand{\intexptype}{\textbf{exp(int)}}
\newcommand{\boolexptype}{\textbf{exp(bool)}}

\newcommand{\Askip}{\textbf{skip}}
\newcommand{\Atrue}{\textbf{tt}~}
\newcommand{\Afalse}{\textbf{ff}~}
\newcommand{\Awhile}[2]{\textbf{while}~#1~\textbf{do}~#2~\textbf{done}}
\newcommand{\Acond}[3]{\textbf{if}~#1~\textbf{then}~#2~\textbf{else}~#3}
\newcommand{\Anew}[3]{\textbf{new}~#1~:#2~\textbf{in}~#3}
\newcommand{\Aadvance}[2]{\textbf{advance}(#1,~#2)}
\newcommand{\Atick}[1]{\textbf{tick}(#1)}
\newcommand{\Aget}[1]{\textbf{get$_#1$}}
\newcommand{\Agetn}{\textbf{get}}

\newcommand{\Asend}[2]{\textbf{send$_#1$}(#2)}
\newcommand{\Atake}[1]{\textbf{take$_#1$}}

\newcommand{\Abs}[1]{\lambda #1 \cdot}

\newcommand{\machines}{\textsc{TM-Machine}}

\newcommand{\simulates}[2]{#2 \precsim #1}

\newcommand{\simin}[1]{\underset{#1}{\sim}}
\renewcommand{\simin}[2]{{({\scriptstyle{#1} \sim \scriptstyle{#2}})}}
\newcommand{\notsimin}[1]{\underset{#1}{\not \sim}}
\renewcommand{\notsimin}[2]{{({\scriptstyle{#1} \not \sim \scriptstyle{#2}})}}

\newcommand{\notsim}{\not \sim}

\newcommand{\boundedseparable}[1]{\overset{#1}{\nsim}}

\newcommand{\outmorph}{\mathsf{out}}
\newcommand{\nextmorph}{\mathsf{next}}

\newcommand{\edge}[1]{\overset{#1}{\rightarrow}}
\newcommand{\longedge}[1]{\xrightarrow{#1}}
\newcommand{\transedge}[1]{\xrightarrow{#1}^+}

\newcommand{\process}[2]{\textsc{Proc}(#1,#2)}
\newcommand{\processnopar}{\textsc{Proc}}

\newcommand{\RT}{\textsc{RT-Proc}}

\newcommand{\define}{\triangleq}

\newcommand{\rtmachine}{RT-Machine}

\newcommand{\reducesto}{\Downarrow}
\newcommand{\diverges}{\underset{\infty}{\Downarrow}}
\newcommand{\values}{\mathcal{V}}

\newcommand{\preonestep}{\rightarrowtail}
\newcommand{\step}[1]{\underset{\rightarrowtail}{\textsc{#1}}}
\newcommand{\onestep}{\rightarrow}
\newcommand{\finitestep}{\overset{*}{\rightarrow}}
\newcommand{\infinitestep}{\overset{\infty}{\rightarrow}}
\newcommand{\coinductive}{\overset{co*}{\rightarrow}}

\renewcommand{\sc}[1]{\textsc{#1}}

\newcommand{\pumpfailure}{{PUMP\_FAILURE}~}
\newcommand{\emstop}{{EM\_STOP}~}
\newcommand{\sbwaiting}{{SB\_WAITING}~}
\newcommand{\progready}{{PROGRAM\_READY}~}
\newcommand{\puready}{{PU\_READY}~}
\newcommand{\fmstartup}{{FM\_STARTUP}~}

\newcommand{\normstartup}{{NORM\_STARTUP}~}
\newcommand{\degrstartup}{{DEGR\_STARTUP}~}
\newcommand{\degrstartupreq}{{DEGR\_STARTUP\_REQ}~}

\newcommand{\constraint}[3]{#1 \overset{#3}{\leadsto} #2}
\newcommand{\machinerun}[2]{\xrightarrow[#2]{#1}}

\renewcommand{\bar}[1]{\overline{#1}}
\newcommand{\nondetseparators}{\mathsf{S}}
\newcommand{\detseparators}{\mathsf{DS}}
\newcommand{\prl}{\parallel}
\newcommand{\systems}[2]{\mathsf{Sys}(#1, #2)}

\newcommand{\datatypes}{\mathfrak{D}}
\newcommand{\abstractdatatypes}{\mathfrak{A}}

\newcommand{\prefix}{\mathsf{prefix}}

\renewcommand{\language}[1]{\mathcal{L}^\ast_{#1}}
\newcommand{\inflanguage}[1]{\mathcal{L}^\omega_{#1}}

\newcommand{\alllanguage}[1]{\mathcal{L}^{\infty}_{#1}}
\newcommand{\finlanguage}[1]{\mathcal{L}^{\ast}_{#1}}

\newcommand{\outputs}{\mathcal{O}}

\newcommand{\reactime}{\mathsf{reactime}}
\newcommand{\detreactime}{\mathsf{detreactime}}
\newcommand{\reactive}{\mathsf{reactive}}

\newcommand{\differences}{\mathsf{diff}}

\newcommand{\separatingpairs}{\mathsf{SepPairs}}
\newcommand{\detseppairs}{\mathsf{DSepPairs}}
\newcommand{\strongseppairs}{\mathsf{SSP}}

\newcommand{\detobs}{\mathsf{DOE}}

\newcommand{\sspsequence}{\mathsf{SSPseq}}

\newcommand{\merge}{\oplus}

\newcommand{\DOEorder}{\preccurlyeq}

\newcommand{\obsorder}{\mathsf{ObsOrder}}

\newcommand{\DOEcompose}{\mathsf{DOEcompose}}

\newcommand{\mapfunc}{\mathsf{map}}


\begin{abstract}
  This paper presents an investigation of the notion of reaction time in some synchronous systems.
  A  state-based description of such systems is given, and the reaction time of such systems
  under some classic composition primitives is studied.
  Reaction time is shown to be non-compositional in general. Possible solutions are
  proposed, and applications to verification are discussed. This framework is illustrated 
  by some examples issued from studies on real-time embedded systems.
\end{abstract}

\section{Introduction}

A primary concern when developing hard real-time embedded systems
is to ensure the timeliness of computations. This kind of requirement
is often expressed as a reaction time constraint, i.e. an upper bound on
the time the system may take to process an input and produce the related output.
When systems are composed of multiple communicating agents, this task may
be difficult. In this paper, we propose a formalization of reaction time for a certain class of
synchronous systems. We show that reaction time is a fine-grained notion of functional
dependency, and we show that it is non-compositional. In order to solve this
problem, we propose an approximate but compositional method to reason on
functional dependency and reaction time.

\paragraph{Related work.} The specification and verification of temporal properties
traditionally relies on temporal logic \cite{Emerson:1982} or related formalisms \cite{Alur94atheory}.
With these formalisms, the system designer gives a specification of some
causality or quantitative property which is then verified by model-checking.

These methods are also applicable to the restricted class of synchronous systems \cite{lustre:ieee}.
The OASIS \cite{DavidDLOHP98} system, which motivated this study, belongs to this class. A traditional compositional 
verification of synchronous systems using Moore machines \cite{MooreMachines} was given in \cite{Clarke1989}. 
Our formal framework to reason on reaction time was heavily inspired by the literature 
on information flow analysis \cite{Barbuti02secureinformation} and on the category-theoretic 
view of process algebras \cite{AbramskyInteraction}. 
It is also similar to \emph{testing} methods \cite{SimTestingStannett}.

\section{Preliminaries}

\subsection{Case study: proving the reactivity of a simple system}
\label{casestudy}

Let $S$ be a black box with two buttons $A$ and $B$ as inputs and
the elements of any non-singleton set as outputs. In this example, we will assume 
that $S$ is deterministic. Our goal is to decide whether pressing $A$ has
any observable effect on the system. A naive solution is to verify
whether the new observable state is different from the previous one.
$$
\begin{array}{l|lll}
  \text{Input}            & S & \longedge{A} & S_A \\
  \text{Observable state} & o &              & o'
\end{array}
$$
If $o \not = o'$, we may consider that the system seems to have answered
to the pressing of $A$. There are two counter-arguments to this conclusion.
\begin{enumerate}
\item $S$ may have decided in advance to output $o'$;
\item there is no reason for an observable consequence to occur immediately
  after pressing the button.
\end{enumerate}
In order to obtain a correct solution, the main point to take into account
is that the observable state is not only function of the inputs but also
of the internal state. From now on, we will assume that we have two
identical copies of $S$, and we will proceed to the experiment simultaneously
with the button $A$ and the button $B$.
$$
\begin{array}{l|lll|lll}
  \text{Input}            & S & \longedge{A} & S_A & S & \longedge{B} & S_B \\
  \text{Observable state} & o &              & o' & o &              & o''
\end{array}
$$
If $o' \not = o''$, we deduce that the system has distinguished between
pressing the button $A$ and the button $B$. This experiment is thus strictly more
informative than the previous one. In the other case, knowing that $o' = o''$
is not enough for us to extract any information on the internal behavior of the
system. Indeed, the second counter-argument advances that the observable reaction
can occur after an arbitrary number of transitions. A solution is to iterate the
experiment on $S_A$ and $S_B$ until observing a difference, but the observable
state then becomes possibly correlated to the other choices $A$ or $B$ performed
during the experiment. The choices must thus be identical for $S_A$ and $S_B$.

We can informally define reactivity by stating that if there exists a finite
sequence of experiments (i.e. a word on $\{A,B\}$) allowing to distinguish the
systems $S_A$ and $S_B$, then $S$ is reactive. In formal terms, this is equivalent to
stating that $S_A$ and $S_B$ must be non-bisimilar. The reaction \emph{time}
is the maximum length of the minimal experiment allowing to prove 
non-bisimilarity.

\subsection{Notations and definitions}

Some definitions will be useful to our work. Let $\Sigma$ be a set. The set of finite words on $\Sigma$ is noted 
$\Sigma^*$, and the set of infinite words is noted $\Sigma^\omega \equiv \natset \rightarrow \omega$.
The set of finite and infinite output words is $\Sigma^\infty = \Sigma^* \bigcup \Sigma^\omega$.
The length of a finite word $w$ will be noted $|w|$.
For any word $w$ we will note $\prefix(w, len)$ the prefix of $w$ of length $len$, and 
we also note $w[i]$ the $i$-th symbol of $w$, where $i \in [0; |w|-1]$.

The singleton set is $\singletonset = \{ \star \}$ (up to isomorphism), and the disjoint union of two
sets $A$ and $B$ is noted $A + B$. The set of natural integers strictly inferior to $x$ is noted $\fin{x}$.

\section{A formal model of synchronous systems}
 
We aim at giving a formal model of synchronous systems sufficiently expressive to 
encode languages such as Lustre \cite{lustre:ieee} and PsyC \cite{DavidDLOHP98}.
Our formalism is an adaptation of Moore machines \cite{MooreMachines}.

\subsection{The synchronous abstraction}

We will restrict ourselves to the set of systems which respect the synchronous mode of
computation. In this model, the computation is divided in successive \emph{rounds}.
Each transition from a round to the next denotes the \emph{tick} of a global logical clock.
The observable state of a system is constant on each round, and changes only at the
boundary between rounds. At each new round, the new internal state is a function of
the current input and internal state (equivalently, the internal
state is a function of the initial internal state and all previous inputs).
The observable state is only function of the internal state.
The following timeline shows an example of a deterministic synchronous computation 
involving three successive rounds.
\begin{center}
\includegraphics{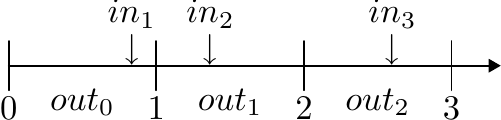}
\end{center}
In this example, the output $out_0$ is a function of the initial internal state only;
the output $out_1$ is a function of the initial internal state and $in_1$; and
the output $out_2$ is a function of the initial internal state, $in_1$ and $in_2$.
Thus, inputs have no \emph{immediate} effect on the observable state.

\subsection{State-based description of synchronous systems}

We define our synchronous systems as a labeled transition system (LTS)
inspired by Moore machines. We recall here some definitions which will be useful in the
following developments.

\begin{definition}[Synchronous system] Let $In$ be a set of inputs and $Out$
  be a set of inputs. A synchronous system $S = \langle In, Out, Q, E, \outmorph, q_i \rangle$ is the data of:
  \begin{itemize}
  \item a set of states $Q$,
  \item a transition relation $E \subseteq Q \times In \times Q$,
  \item a labeling function associating states to outputs $\outmorph : Q \rightarrow Out$,
  \item and an initial state $q_i \in Q$. 
  \end{itemize}
  The sets $In$ and $Out$ are the signature of $S$. We will note $p \longedge{a} q$ as a shorthand
  for $(p, a, q) \in E$. Moreover, we constrain our systems to be \emph{finitely branching}
  and to be \emph{complete}, i.e. $\forall p \in Q, \forall a \in In, \exists p \longedge{a} q$
  \footnote{In practice, these conditions constrain the input and output data sets to be finite.}.
\end{definition}

The computational meaning of a LTS is expressed using the notion of \emph{run}. It allows 
to define the \emph{output language} associated to an input word.

\begin{definition}[Run of a synchronous system, output language]
Let $S = \langle In, Out, Q, E, \outmorph, q_i \rangle$ be a synchronous system.
We define the notions of \emph{finite} run and the associated output language.

\textbf{Finite runs.} Let $w \in In^*$ be a finite input word. The set of finite, maximal runs of $S$ on $w$
starting from state $q_0 \in Q$ is $Runs^*_S(q_0, w) \subseteq Q \times (In \times Q)^*$ and is defined as:
$$
Runs^*_S(q_0, w) = \{ q_0 . w[0] . q_1 \ldots w[|w|-1] . q_{|w|-1}~|~ \forall i \in [0; |w|-1], q_i \longedge{w[i]} q_{i+1} \}.
$$

\textbf{Output language.} The output language of $S$ associated to $w$ and $q_0$ is 
$\finlanguage{S}(q_0, w) \subseteq Out^\ast$ and is defined as follows:
$$
\finlanguage{S}(q_0, w) = \{ \outmorph(q_0) . \outmorph(q_1) \ldots \outmorph(q_{n-1})~|~ q_0 . a_0 . q_1 . a_1 \ldots \in Runs^\ast_S(q_0,w) \}.
$$
\end{definition}

The classical equivalence relation on states of labeled transition systems is bisimilarity.
Its definition is slightly adapted to our notion of synchronous system.

\begin{definition}[Bisimilarity, non-bisimilarity] Let $S = \langle In, Out, Q, E, \outmorph, q_i \rangle$ be a
  synchronous system. A relation $R \subseteq Q \times Q$ is said to be a (strong) \textbf{bisimulation}
  if and only if the following condition holds:
  $$
    \forall (p, q) \in R,  \outmorph(p) = \outmorph(q)~\wedge~ 
                           (\forall p \longedge{a} p', \exists q \longedge{a} q', (p', q') \in R)~ \wedge~
                           (\forall q \longedge{a} q', \exists p \longedge{a} p', (p', q') \in R).
  $$
  If there exists such a relation $R$ s.t. $(p, q) \in R$, then $p$ and $q$ are said to be bisimilar,
  which is noted $p \sim q$. Moreover, bisimilarity is an equivalence relation.
  Conversely, the negation of bisimilarity $\nsim \subseteq Q \times Q$ is inductively defined by the rules below.
  In these rules, $p, q \in Q$ and $a \in In$ are universally quantified.
  $$
  \begin{array}{lr}
    \inferrule [base]
    { }
    { \outmorph(p) \neq \outmorph(q) \rightarrow p \nsim q } &
    \inferrule [ind]
    { \exists p \longedge{a} p', \forall q \longedge{a} q', p' \nsim q'~\vee~\exists q \longedge{a} q', \forall p \longedge{a} p', p' \nsim q' }
    { p \nsim q }
  \end{array}
  $$
\end{definition}

In this paper, except when stated otherwise, all state spaces shall be assumed to be quotiented by
bisimulation equivalence.

\section{Reaction time of a state in a synchronous system}

This section formalizes the ideas exposed in the case study, in Sec.\ref{casestudy},
and extends them to non-deterministic systems. The case study proposes to
model reaction as a functional dependency between a set of inputs and the
future behavior of the system. In our formal model, these future behaviors 
are represented as the successor states of the considered state.

In this setting, we will first define a notion of reactivity inspired by functional
dependency, and then define reaction time as the necessary time to prove
that two successor states are not bisimilar.

\subsection{Reactivity}
Let $In = \{A, B\}$ and $Out$ be a non-singleton set. Let $S = \langle In, Out, Q, \outmorph, E, q_i \rangle$ be a synchronous system.
and let's assume that state $q \in Q$ is as depicted in Fig. \ref{fig:somesystems:reactive}. We observe that
there are two inputs $A$ and $B$ leading to non-bisimilar states $q_1$ and $q_2$.
In this case, state $q$ is thus reactive. In the case of non-deterministic systems, 
we must generalize this idea: if there exists an asymmetry in the possible transitions of a system,
then it is reactive. Let's assume that state $q$ is as depicted in Fig. \ref{fig:somesystems:nonreactive}.
There, $q$ is not reactive because there is a symmetry between the transitions possible with
$A$ and the transitions possible with $B$. This symmetry is broken in Fig. \ref{fig:somesystems:nonreactive}.
Non-determinism highlights the fact that reactivity is a kind of non-bisimilarity.

\begin{figure}
  \centering
  \subfigure[Reactive deterministic system]{
    \includegraphics[height=3cm]{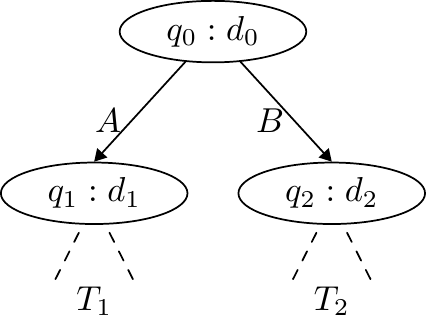}
    \label{fig:somesystems:reactive}
  }
  \subfigure[Non-reactive non-det. sys.]{
    \includegraphics[height=3cm]{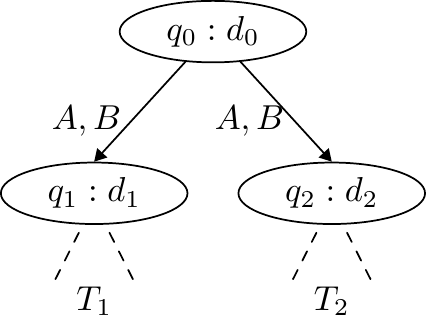}
    \label{fig:somesystems:nonreactive}
  }
  \subfigure[Reactive non-deterministic system]{
    \includegraphics[height=3cm]{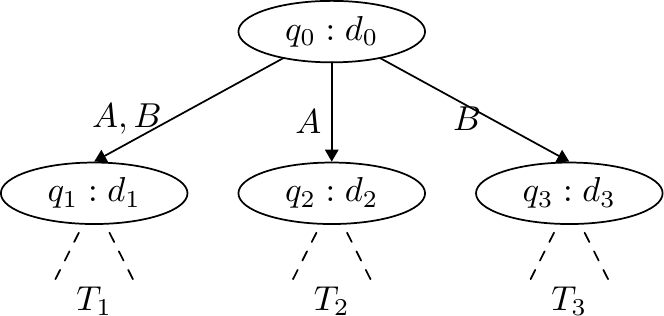}
    \label{fig:somesystems:reactive2}
  }
  \caption{Some reactive and non-reactive systems}
  \label{fig:somesystems}
\end{figure}

\begin{definition}[Reactivity of a state in a synchronous system, separating pair]
  Let $In, Out$ be two sets, $S = \langle In,$ $Out,$ $Q,$ $\outmorph,$ $E,$ $q_i \rangle$ be a synchronous system
  and $q \in Q$ be a state. We will denote $\reactive(q)$ the fact that $q$ is reactive. The
  predicate $\reactive(q)$ is defined as follows:
  $$
    \reactive(q) \define~\exists a_1, a_2 \in In, a_1 \not = a_2~\wedge~
    \left( \exists q \edge{a_1} q_1, \forall q \edge{a_2} q_2, q_1 \nsim q_2 \right).
  $$
  The pair of inputs $(a_1, a_2)$ is a \textbf{separating pair} of $q$. It is \emph{deterministic} iff
  $\forall q \edge{a_1} q_1, \forall q \edge{a_2} q_2, q_1 \nsim q_2$. The set of separating pairs of $q$ is noted 
  $\separatingpairs(q)$, and the deterministic subset is $\detseppairs(q)$.
  
\end{definition}

\subsection{Observable effects}

Observable effects stem from a fine-grained study of reactivity. In this section, we show that an observable effect
characterizes a temporally localized difference between the behaviors of non-bisimilar states.
We show that an input data has observable effects on the system on a not necessarily finite interval.

\paragraph{Characterizing the difference between two states.} 
Characterizing difference between states can be done by studying the negation of bisimulation.

\begin{definition}[Separators, strongly separable states]
  Let $S = \langle In, Out, Q, \outmorph, E, q_i \rangle$ be a synchronous system and $p_1, q_1 \in Q$ s.t. 
  $p_1 \nsim q_1$. A constructive proof of $p_1 \nsim q_1$ is the data of (at least) two 
  \emph{separating runs} $r_1 \in Runs^*_S(p_1, w)$ and $r_2 \in Runs^*_S(q_1, w)$:
  $$
    r_1 = p_1 . a_1 . p_2 . a_2 \ldots a_n . p_n~\text{and}~r_2 = q_1 . a_2 . q_2 . a_2 \ldots a_n . q_n.
  $$
  These runs are labelled on input by a finite word $w = a_1 . a_2 \ldots a_n$ called \textbf{separator}, and
  generate output words $o_1 \in Out^*$ $= \outmorph(p_1) .$ $\outmorph(p_2) \ldots$ $\outmorph(p_n)$
  and $o_2 \in Out^*$ $= \outmorph(q_1) .$ $\outmorph(q_2) \ldots$ $\outmorph(q_n)$.
  More generally, any separator $w$ induces a nonempty set $\outputs(p_1, q_1, w)$ of pairs of 
  different output words $(o_1, o_2)$ generated by separating runs s.t.
  $o_1 \in \finlanguage{S}(p_1, w), o_2 \in \finlanguage{S}(q_1, w)$.

  A separator $w$ is \textbf{deterministic} when all its runs are separating, i.e. all runs stems from
  a proof of $p_1 \nsim q_1$. The set of separators of two states $p, q$ is noted $\nondetseparators(p, q)$, 
  and the set of deterministic separators is noted $\detseparators(p, q)$. 
  Note that $\detseparators(p, q) \subseteq \nondetseparators(p, q)$.

  Two states $p, q$ are said to be \textbf{strongly separable}, noted $p \separated q$, iff all infinite inputs
  words are prefixed with a deterministic separator.
\end{definition}
\vspace{4mm}

\begin{figure}
  \centering
  \subfigure[]{
    \includegraphics[width=6.5cm]{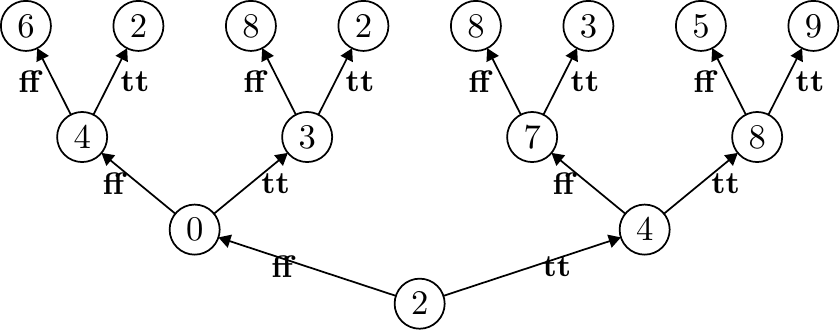}
    \label{fig:separators:left}
  }
  \subfigure[]{
    \includegraphics[width=6.5cm]{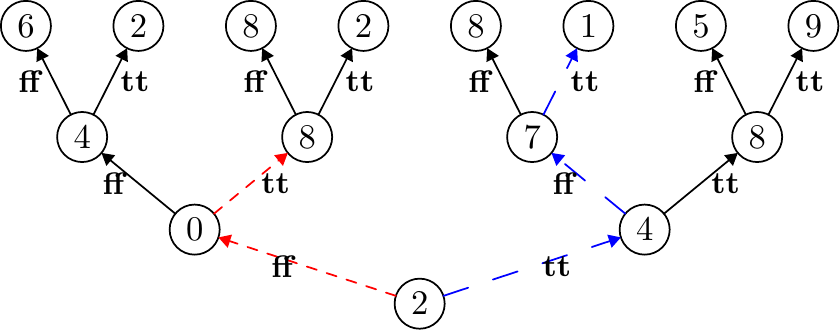}
    \label{fig:separators:right}
  }
  \caption{A pair of separable LTS}
  \label{fig:separators}
\end{figure}

Fig.~\ref{fig:separators} shows two separable LTS.
The fact that they are non-bisimilar is proved by the existence of two separators 
(although one would suffice) of length two and three, as emphasized by the dotted paths. 

Once separability of two states defined, we can define what is an observable effect
and \emph{when} it occurs. This is based on observing the differences in the output 
word pairs generated by a separator. 

\begin{definition}[Observable effect]
  Let $p, q  \in Q$ be two states. Let $w \in In^*$ be an input word.
  The observable effects are generated by all the prefixes of $w$ wich are \emph{separators}.
  $$
  \begin{array}{l}
    \differences_{p,q} : \prod_{w \in In^*} \fin{|w|} \rightarrow \singletonset + (Out \times Out) \\
    \differences_{p,q}(w, n) = (x_1, x_2) \leftrightarrow \prefix(w, n+1) \in \nondetseparators(p,q)~\wedge~  (o_1, o_2) \in \outputs(p, q, \prefix(w, n+1))~\wedge\\
    \hspace{4cm} x_1 = o_1[n] \wedge x_2 = o_2[n]
  \end{array}
  $$
  $\differences_{p, q}(w, n)$ returns $\star \in \singletonset$ if there is no difference at index $n$
  or a pair $(o_1[n], o_2[n])$ s.t. $o_1[n] \not = o_2[n]$ at the same index. These differences are the
  \textbf{observable effects} induced by $w$.
\end{definition}

\begin{figure}
  \centering
  \subfigure[]{
    \includegraphics[width=5.5cm]{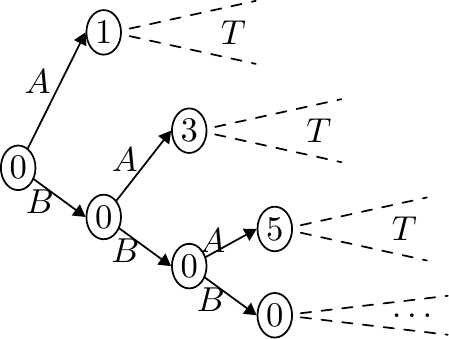}
    \label{fig:WCRT:left}
    \hspace{2cm}
  }
  \subfigure[]{
    \includegraphics[width=5.5cm]{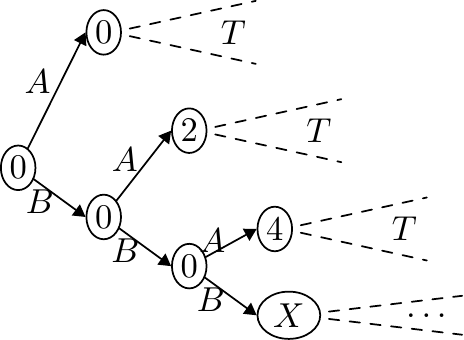}
    \label{fig:WCRT:right}
  }
  \caption{Partial separability}
  \label{fig:WCRT}
\end{figure}

The existence of a separator ensures that there \emph{may} be an observable effect.
Fig. \ref{fig:WCRT} show two systems (whose state space is quotiented by bisimulation equivalence)
with an unknown output data $X$ in Fig. \ref{fig:WCRT:right}.
State names are omitted and the nodes only contain the output data.
If $X = 0$, the initial states are separable with the words of the language $B^\ast A$. 
These are \emph{deterministic} separators: reading them on input ensures an observable effect.
We observe that the input words of the language $B^\infty$ do not yield an observable effect.
On the other hand, if $X \neq 0$, all infinite words are prefixed by a (deterministic) separator.
An eventual observable reaction is guaranteed.

\subsection{Reaction time}
\label{sec:reactiontime}

Since we work with logical time, the occurrence times of observable effects are their indices in the associated output traces.
We may \emph{define} the reaction time of a state as the maximum of the occurrence time of the
first observable effect. This yields two possible views of reaction time: an optimistic one 
(an observable effect \emph{may} arise \ldots) and a pessimistic one (an observable effect \emph{must} 
arise). Moreover, the reaction time of a state can be valid for \emph{all} contexts or just for \emph{some}.
Our application domain requires that we choose a pessimistic approach. Compositionality in turn
requires that we quantify over all possible contexts when defining reaction time, as will be shown later.

\begin{definition}[Deterministic reaction time]
The (deterministic) reaction time of a state w.r.t. an input is the \emph{maximum} number of transitions
that must be performed to see the first observable effect arise, for any input.
Let $q \in Q$ be a state s.t. $\reactive(q)$ holds. We note by $\detreactime(q) = t$ the fact that $q$ has 
a reaction time of $t$ transitions, where:
$$
\begin{array}{ll}
  \detreactime(q) = max \{ ~n ~~| & (a_1, a_2) \in \detseppairs(q),~q \edge{a_1} q_1,~q \edge{a_2} q_2, ~~ q_1 \separated q_2, \\
                                  & w \in In^\omega,~~ \differences_{q_1,q_2}(w, n) \neq \star~\wedge~\forall n' < n, \differences(w, n') = \star \} 
\end{array}
$$
\end{definition}

\section{Observable effects under composition}

In the previous section, we have defined a notion of observable effects for synchronous systems.
In this section, we will investigate the way observable effects evolve when synchronous systems
are composed. To this end, we will define a small process algebra, inspired by
the category-theoretical work of Abramsky on concurrency \cite{AbramskyInteraction}.

\subsection{Data types}

In order to model multiple input-output ports, we will force a monoidal structure on the data 
processed by our synchronous systems. Let $Basic = \{ \inttype; \booltype; \singletonset; \ldots \}$ be
a set of basic datatypes. The set of datatypes is the monoid $\langle \datatypes, \times \rangle$ generated
by $Basic$ and closed by cartesian product.

\subsection{Composition operators}

Our composition operators are sequential composition and parallel composition. The transition relations
of the compound systems are defined
in a classic way, using a small-step semantics given by SOS inference rules (c.f. Fig. \ref{fig:composition}). 

\paragraph{Sequential composition.} Let $A, B, C$ be three sets.
Let $S_f = \langle A,$ $B,$ $Q_f,$ $E_f,$ $\outmorph_f,$ $q_{i,f} \rangle$ and
$S_g = \langle B,$ $C,$ $Q_g,$ $E_g,$ $\outmorph_g,$ $q_{i,g} \rangle$ be two systems.
The sequential composition $S_g \circ S_f$ proceeds by redirecting the output of $S_f$ to the input
of $S_g$.
The compound system is $S_g \circ S_f = \langle A, C, Q_f \times Q_g, E_{g \circ f}, \outmorph_{g \circ f}, (q_{i,f}, q_{i,g}) \rangle$,
where $E_{g \circ f}$ and $\outmorph_{g \circ f}$ are defined in Fig. \ref{fig:composition}.

\paragraph{Parallel composition.} Let $A, B, C, D$ be four sets.
Let $S_f = \langle A,$ $B,$ $Q_f,$ $E_f,$ $\outmorph_f,$ $q_{i,f} \rangle$ and
$S_g = \langle C,$ $D,$ $Q_g,$ $E_g,$ $\outmorph_g,$ $q_{i,g} \rangle$ be two systems.
The parallel composition proceeds by pairing the respective 
transitions of $S_f$ and $S_g$ in a synchronous way. The compound system is
$S_f \parallel S_g = \langle A \times C, B \times D, Q_f \times Q_g, E_{f \parallel g}, \outmorph_{f \parallel g}, (q_{i,f}, q_{i,g}) \rangle$
where $E_{f \parallel g}$ and $\outmorph_{f \parallel g}$ are also defined in Fig. \ref{fig:composition}.

\paragraph{Other operations.} An other important operation is the feedback. We omit it for space reasons,
but it must be noted that it exhibits the same behavior as sequential composition. 
The other operations necessary to make our definitions into an usable process algebra are
structural ones, like data duplication, erasure, etc. These important details are omitted from the following
study.

\begin{figure}

\begin{tabular}{m{7cm}m{7cm}}
  \hspace{1cm} \includegraphics{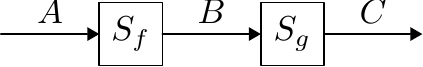} & \hspace{2cm} \includegraphics{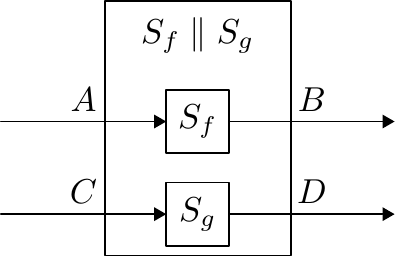} \\
  \begin{small}
  $
  \begin{array}{c}
    \inferrule [seq-next]
             {q_f \longedge{in_f} q'_f \\ q_g \longedge{\outmorph_f(q_f)} q'_g }
             { (q_f, q_g) \longedge{in_f} (q'_f, q'_g) } \\ \\

    \inferrule [seq-output]
             { }
             { \outmorph_{g \circ f}(q_f, q_g) = \outmorph_g(q_g) }
  \end{array}
  $ \end{small} & \begin{small}
  $
  \begin{array}{c}
    \inferrule [par-next]
             {q_f \longedge{in_f} q'_f \\ q_g \longedge{in_g} q'_g }
             { q_f \parallel q_g \longedge{\langle in_f, in_g \rangle} q'_f \parallel q'_g} \\ \\

    \inferrule [par-output]
             { }
             { \outmorph_{f \parallel g}(q_f \parallel q_g) = \langle \outmorph_f(q_f), \outmorph_g(q_g) \rangle}
  \end{array}
  $ \end{small} \\
  \hspace{1.5cm} (a)~Sequential composition & \hspace{1.5cm} (b)~Parallel composition
\end{tabular}
\caption{Definition of the composition operations}
\label{fig:composition}
\end{figure}

\subsection{Observable effects w.r.t. sequential composition}

In this section, we study the behavior of the observable effects of systems when they are
composed. We restrict our attention to sequential composition since it is easy to show that
parallel composition doesn't alter the behavior of the sub-components.
We show that under sequential composition, whenever reactivity still holds, 
the observable effects can vary arbitrarily. Our examples will be given on Moore machines
whose state space is not quotiented by bisimulation equivalence.

\vspace{3mm}
\hspace{-9mm}
\begin{tabular}{m{7cm}m{8cm}}

The proof that reactivity can be lost follows the same argument that shows that the
composition of two non-constant total functions can be constant. The figure to the
right shows the composition of two reactive Moore machines $S_f$ and $S_g$ whose composition is
not reactive. In this example, this stems from the fact that the observable effect $(0, 1)$
of the input received in state $p_0$ of the machine $S_f$ is not ``taken into account'' by the machine $S_g$, i.e.
$(0, 1)$ is not a separating pair of state $q_1$.

& 

\begin{tabular}{c}
  \begin{tabular}{m{2.3cm}m{4.3cm}}
    \includegraphics[width=2.3cm]{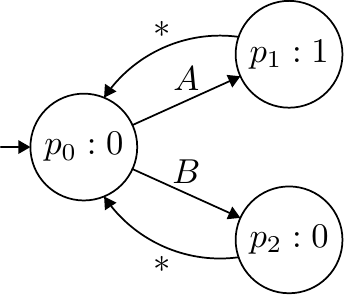} &
    \includegraphics[width=4.3cm]{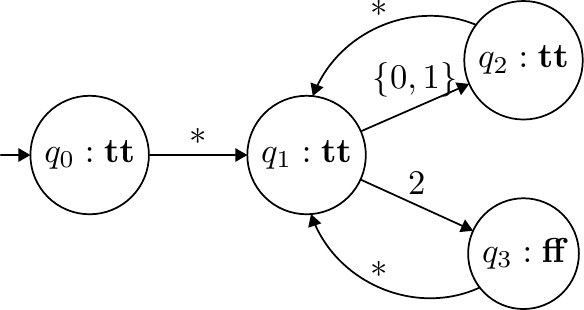} \\
    \hspace{6mm} Machine $S_f$ &  \hspace{20mm} Machine $S_g$
  \end{tabular} \\
  \includegraphics[width=1.0cm]{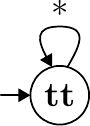} \\
  Machine $S_g \circ S_f$
\end{tabular}

\end{tabular}

\begin{figure}
\centering
\begin{tabular}{cc}
  \includegraphics[width=5cm]{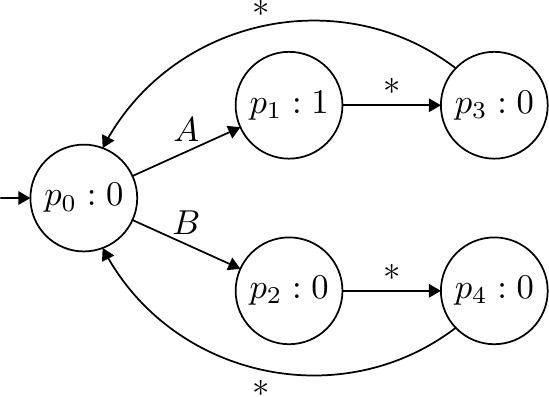} \hspace{0.5cm} &
  \hspace{0.5cm} \includegraphics[width=5cm]{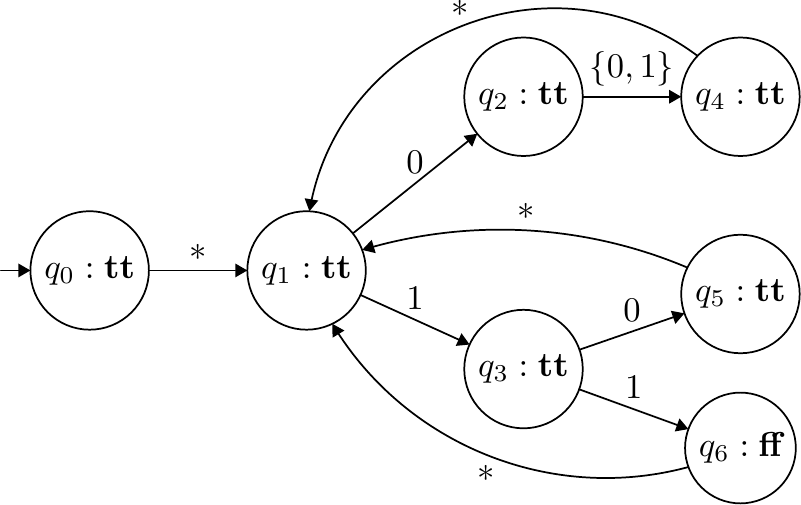} \\
  Machine $S_f$ & Machine $S_g$
\end{tabular}
\caption{Example of disappearing separator}
\label{fig:disappearingseparators}
\end{figure}

The fact that an observable effect of $S_f$ is a separating pair of $S_g$ is not enough to
guarantee an observable effect on output. Sequential composition restricts the input
language of the system in receiving position (here, $S_g$). This means that separators
can appear and disappear arbitrarily. The two Moore machines in Fig.
\ref{fig:disappearingseparators} are modifications of the earlier ones.
The states $p_0$ and $q_1$ are still reactive, but when composed
the output symbols on $p_3$ and $p_4$ restricts the set of inputs
of the states $q_2$ and $q_3$ to the word $0$. Thus, $q_2$ and $q_3$
are no more separable and the result is the constant machine
shown earlier.

The conclusion of this study confirms the intuition: there is no general way of
guaranteeing functional dependencies. These results extend to reaction time, 
which is not conserved: the receiving machine may ignore the first observable
effect and take into account ulterior ones.

In verification terms, this means that in order to verify that the composition 
of two systems is reactive, a full search of the state space for separators must be undertaken. 
In the next section, an approximate but compositional method to simplify 
this process is proposed.

\section{Under-approximating observable effects}

This section proposes a partial solution to some problems encountered earlier, namely:
\begin{enumerate}
\item the fact that non-deterministic separators do not guarantee an observable effect,
\item the non-compositionality of reactivity and observable effects.
\end{enumerate}

We proceed by reducing our focus to the cases where reactivity, which is a branching-time property of states,
can be reduced to a linear-time one. We show how to compute the separators and separating pairs which are preserved when
``merging'' all branches of the computation tree.

Let us assume the existence of two sets of data $In$ and $Out$.
Let $q$ be a state such that $\reactive(q)$ holds, and let
$(a_1, a_2) \in \separatingpairs(s)$ be a separating pair of inputs.
In Sec. \ref{sec:reactiontime}, we observed that in order to ensure the occurrence of 
an observable effect and the existence of a reaction time, $q$ must be such that 
all inputs are deterministic separators for all $q_1$ and $q_2$ s.t. $q \longedge{a_1} q_1$ and
$q \longedge{a_2} q_2$. If this condition is met, we can compute \emph{deterministic observable
effects}, i.e. effects which exists for \emph{all} separators. Similarly, we can define
\emph{deterministic separating pairs}.

First, we define some operations in order to merge sequences of observable effects.
We define the operation $\merge : \left( \singletonset + Out \times Out \right) \times \left( \singletonset + Out \times Out \right) \rightarrow \left( \singletonset + Out \times Out \right)$ as:
$$
\begin{array}{ccccll}
x & \merge & x & = & x \\
x & \merge & y &  = & \star &\text{if $x \neq y$.}
\end{array}
$$
The extension of this operation to sequences of symbols on $(\singletonset + Out \times Out)$ is defined
straightforwardly. If $d_1, d_2 \in (\singletonset + Out \times Out)^\omega$ are two infinite sequences,
their merging is also noted $d_1 \merge d_2$.

\begin{definition}[Deterministic observable effects, observational order]
  Let $q$ be a state s.t. $\reactive(q)$ holds. The sequence of deterministic observable effects of $q$
  is noted $\detobs(q)$ and is defined as follows:
  $$
  \detobs(q) = \begin{array}{c} \mbox{\Huge $\oplus$} \\ {\scriptstyle (a_1, a_2) \in \separatingpairs(q)} \end{array}
  \{ \differences_{q_1, q_2}(w)~|~q \longedge{a_1} q_1, q \longedge{a_2} q_2, w \in In^\omega \}.
  $$
  It is possible to define a relation $\prec~ \subseteq (\singletonset + Out \times Out)^\omega \times (\singletonset + Out \times Out)^\omega$, where:
  $$
  w_1 \prec w_2 \leftrightarrow \exists! i, \left( \forall j \neq i, w_1[j] = w_2[j] \right) \wedge (w_1[i] = \star \wedge w_2[i] \neq \star).
  $$
  The reflexive-transitive closure of $\prec$ is the \textbf{observational order} and is noted $\DOEorder$. The set $\obsorder(q)$ of infinite
  strings partially ordered by $\DOEorder$ which has $\detobs(q)$ as greatest element and $\star^\omega$ as least element is called
  by extension the observational order on $q$.
\end{definition}

We must also define linear time-proof separating pairs, called \emph{strongly separating pairs}. Let's consider the systems 
in Fig. \ref{fig:union}, in which only the output data is displayed and state names are omitted. The systems 1 and 2 are symmetrical 
and have both $(A, B)$ as a separating pair for their initial state. However, $(A, B)$ is \emph{not} a separating pair for the \emph{union}
of the two systems. We must define a notion of separating pair for two systems which resists their union.

\begin{figure}[!h]
  \centering
  \subfigure[System 1]{
    \includegraphics[width=4cm]{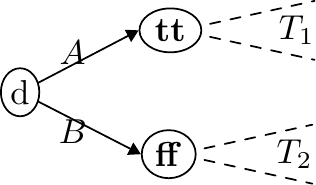}
    \label{fig:union:little3}
  }
  \subfigure[System 2]{
    \includegraphics[width=4cm]{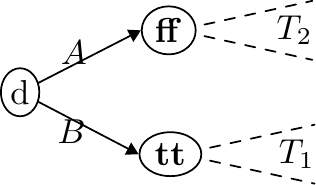}
    \label{fig:union:little4}
  }
  \subfigure[Union of systems 1 and 2]{
    \includegraphics[width=4cm]{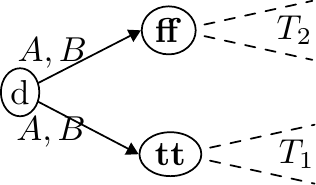}
    \label{fig:union:little5}
  }
  \caption{Union of transition systems}
  \label{fig:union}
\end{figure}

\begin{definition}[Strongly separating pairs]
  Let $q_1, q_2$ be two states s.t. $\reactive(q_{1,2})$ holds. The pair $(a_1, a_2) \in \separatingpairs(q_{1,2})$
  is a strongly separating pair of the union of $q_1$ and $q_2$ if and only if:
  $$
  \left( \exists q_1 \edge{a_1} q'_1, \forall q_2 \edge{a_2} q'_2, q'_1 \nsim q'_2 \right) \vee \left( \exists q_1 \edge{a_2} q'_1, \forall q_2 \edge{a_1} q'_2, q'_1 \nsim q'_2 \right).
  $$
  The set of strongly separating pairs of $q_1$ and $q_2$ is noted $\strongseppairs(q_1, q_2)$. Note
  that $\separatingpairs(q) = \strongseppairs(q, q)$. The \emph{sequence} of strong separating pairs of $q_1$ and $q_2$
  is noted $\sspsequence(q_1, q_2)$ and is defined as follows:
  $$
  \sspsequence(q_1, q_2) = \strongseppairs(q_1, q_2) . \left( \overline{\bigcap} \{  \sspsequence(q'_1, q'_2)~|~(a_1, a_2) \in \separatingpairs(q), q_1 \edge{a_1} q'_1, q_2 \edge{a_2} q'_2\} \right),
  $$
  where $\overline{\bigcap}$ is the extension of set intersection to sequences of sets.
  Now, let $q$ be s.t. $\reactive(q)$ holds. The sequence of strongly separating pairs of $q$
  is:
  $$
  \sspsequence(q) = \separatingpairs(q) . \overline{\bigcap} \{ \sspsequence (q_1, q_2)~|~(a_1, a_2) \in \separatingpairs(q), q \edge{a_1} q_1, q \edge{a_2} q_2 \}.
  $$
\end{definition}

Deterministic observable effects are in fact an \emph{abstraction} of the original
system. The concretization operation is the function associating to a sequence of 
deterministic observable effects the set of all systems which have at least these deterministic 
observable effects (w.r.t. $\DOEorder$). Using this abstraction, checking the compositionality of sequential composition is straightforward.

\begin{lem}[Sequential composition of deterministic observable effects ensures reactivity]
  Let $S_f = \langle A,$ $B,$ $Q_f,$ $E_f,$ $\outmorph_f,$ $q_{i,f} \rangle$ and
  $S_g = \langle B,$ $C,$ $Q_g,$ $E_g,$ $\outmorph_g,$ $q_{i,g} \rangle$ be two systems.
  Let $q_f \in Q_f$ and $q_g \in Q_g$ be two states such that $(q_f, q_g)$ is in the state space of the sequential
  composition $S_g \circ S_f$. We have $\reactive(q_f, q_g)$ if:
  $$
  \exists d \in \obsorder(q_f), \exists i, d[i] \in \sspsequence(q_g)[i+1].
  $$
\end{lem}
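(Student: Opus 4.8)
The plan is to prove that a single separating pair $(a_1, a_2)$ of $q_f$ is already a separating pair of the composite state $(q_f, q_g)$ in $S_g \circ S_f$, so that $\reactive(q_f, q_g)$ follows immediately from its definition. The chain of reasoning is: a \emph{deterministic} observable effect of $S_f$ produces a \emph{guaranteed} output difference on $B$; that difference is fed to $S_g$ as an input pair; and a \emph{strongly} separating pair of $S_g$ turns it into a guaranteed non-bisimilarity downstream. The one-round Moore delay built into the \sc{seq-next} rule is exactly what accounts for the index shift $i \mapsto i+1$ in the statement.

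First I would normalize the hypothesis. Since $\sspsequence(q_g)[i+1]$ is a set of input \emph{pairs}, the condition $d[i] \in \sspsequence(q_g)[i+1]$ forces $d[i] \neq \star$; and because $d \DOEorder \detobs(q_f)$ and the observational order only ever weakens a genuine difference to $\star$, this yields $d[i] = \detobs(q_f)[i] = (b_1, b_2)$ with $b_1 \neq b_2$ in $B$. By the definition of $\detobs(q_f)$ as the merge $\oplus$ over \emph{all} separating pairs, successors and input words, a non-$\star$ value can only arise if every separating run pair agrees; hence for every $(a_1, a_2) \in \separatingpairs(q_f)$, every $a_1$-successor $q_{f,1}$, every $a_2$-successor $q_{f,2}$ and every $w \in In^\omega$, the runs of $S_f$ from $q_{f,1}$ and $q_{f,2}$ on $w$ produce output words $o_1, o_2 \in B^\omega$ with $o_1[i] = b_1 \neq b_2 = o_2[i]$. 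In particular $\reactive(q_f)$ holds, so I may fix one $a_1$-successor and argue for an arbitrary $a_2$-successor, matching the $\exists/\forall$ shape of $\reactive$.

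Next I would track these two runs through the composite, starting from the composite successors $(q_{f,1}, q_{g,1})$ and $(q_{f,2}, q_{g,2})$ and feeding the common input word $w$. By \sc{seq-next}, the output emitted by $S_f$ at its depth-$n$ state is consumed by $S_g$ one round later, so counting from $q_g$ the symbol $o_1[n]$ (resp.\ $o_2[n]$) is read by $S_g$ exactly while it sits in its depth-$(n+1)$ state; this is precisely the shift $i \mapsto i+1$. Thus at depth $i+1$ the two copies of $S_g$ occupy some reachable pair of states and are fed $b_1$ and $b_2$. Since $(b_1, b_2) \in \sspsequence(q_g)[i+1]$ and $\sspsequence$ is a sound under-approximation of the strongly separating pairs available at each depth (a fact I would prove by induction on its defining recursion, using that $\overline{\bigcap}$ is taken over the branches along which $S_g$ can actually evolve), $(b_1, b_2)$ is a strongly separating pair of this reachable $S_g$-pair. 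By the definition of $\strongseppairs$, feeding $b_1$ to one copy and $b_2$ to the other drives them into non-bisimilar states, which — since $\outmorph_{g \circ f}(q_f,q_g) = \outmorph_g(q_g)$ — surfaces as a genuine output difference of the composite. The finite prefix of $w$ up to the round where this difference appears is then a separator for $(q_{f,1}, q_{g,1})$ and $(q_{f,2}, q_{g,2})$ in $S_g \circ S_f$; as $q_{f,2}$ was arbitrary and $q_{f,1}$ fixed, $(a_1, a_2)$ separates $(q_f, q_g)$ and $\reactive(q_f, q_g)$ follows.

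The hard part will be the third step: justifying that the \emph{concrete} $S_g$-state pair reached at depth $i+1$ by the composite run is genuinely covered by the intersection defining $\sspsequence(q_g)[i+1]$. The difficulty is that the composite drives $S_g$ with the actual $S_f$-outputs, which are frequently identical in both copies below index $i$ (both successors even share the first input $\outmorph_f(q_f)$), whereas the recursion for $\sspsequence$ threads along asymmetric, separating-pair inputs. Reconciling these requires showing that membership in $\sspsequence(q_g)[i+1]$ guarantees strong separation at \emph{every} depth-$(i+1)$ reachable pair, including those reached by diagonal or non-deterministically diverging inputs; it is exactly here that the union-resistant, \emph{strong} character of $\strongseppairs$ — rather than an ordinary separating pair — is indispensable, since it is what lets the conclusion survive the possible divergence of the two $S_g$-copies caused by non-deterministic $S_f$-effects at indices below $i$.
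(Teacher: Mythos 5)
Your proposal is, structurally, the paper's own proof: from $d[i] \neq \star$ you extract $\reactive(q_f)$ and the uniform difference $(x_1,x_2) = d[i] = \detobs(q_f)[i]$; you use determinism of the effect to argue that every input word separates every successor pair $(q^1_f, q^2_f)$ with the difference located at index $i$; you lift the two $S_f$-runs to composite runs via \textsc{seq-next}, which produces the shift $i \mapsto i+1$; and you invoke $d[i] \in \sspsequence(q_g)[i+1]$ to conclude that the pair of $S_g$-states reached at depth $i+1$ is strongly separated, hence that the composite successors are non-bisimilar. This is exactly the paper's argument, in the same order. The difference is that the step you flag as ``the hard part'' and leave undischarged is precisely the step the paper disposes of in one clause --- ``since $(x_1,x_2) \in \strongseppairs(r^{1,2}_g)$ (by definition of $\sspsequence$)'' --- so you have not missed an idea the paper supplies; the paper asserts this step without proof.

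Your proposal is nonetheless incomplete, and the repair you sketch would fail. You propose an induction ``using that $\overline{\bigcap}$ is taken over the branches along which $S_g$ can actually evolve''; under the paper's definition that premise is false. The intersection defining $\sspsequence(q_g)[i+1]$ ranges only over successor pairs reached along \emph{separating pairs} $(a_1, a_2)$ with $a_1 \neq a_2$, whereas in the composite both copies of $S_g$ consume the \emph{same} symbol $\outmorph_f(q_f)$ at the very first step, and at any index $j < i$ where $\detobs(q_f)[j] = \star$ the two $S_f$-outputs need not form a separating pair of the current $S_g$-pair (they may simply be equal). These diagonal evolutions are outside the range of $\overline{\bigcap}$, so, read literally, membership in $\sspsequence(q_g)[i+1]$ asserts nothing about the pair $(r^1_g, r^2_g)$ that the composite actually reaches; closing this gap requires strengthening the definition of $\sspsequence$ (e.g.\ intersecting over \emph{all} pairs of inputs, equal or not), not a cleverer induction on the existing one. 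There is also a second leap that you share with the paper: from $(x_1, x_2)$ strongly separating $(r^1_g, r^2_g)$ you conclude a ``genuine output difference of the composite'', but strong separation only yields non-bisimilarity of the $S_g$-\emph{components}; converting that into non-bisimilarity of the composite states requires the separating $B$-words of $S_g$ to be realizable as subsequent $S_f$-outputs --- exactly the ``disappearing separator'' phenomenon the paper itself describes for sequential composition --- and neither your proposal nor the paper argues this.
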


\begin{proof}
  Let $i \in \natset$ be such that $d[i] \in \sspsequence(q_g)[i+1]$. Having $d[i] = (x_1, x_2)$
  implies that $\reactive(q_f)$ holds. Hence, there exists $a_1 \neq a_2$ s.t. 
  $\exists q_f \edge{a_1} q^1_f, \forall q_f \edge{a_2} q^2_f, q^1_f \nsim q^2_f$. By hypothesis, we know that all input words are
  separators of all such $(q^1_f, q^2_f)$. By definition, $d[i]$ is an observable effect of all these separators.
  Hence, for all input words $w$ there will exist two runs $q^1_f \transedge{\prefix(w,i)} r^1_f$ 
  and $q^2_f \transedge{\prefix(w,i)} r^2_f$ s.t. $\outmorph(r^1_f) = x_1$ and $\outmorph(r^2_f) = x_2$. 
  By definition of the sequential composition, this induces the runs $(q^1_f, q^1_g) \transedge{\prefix(w,i)} (r^1_f, r^1_g)$ and
  $(q^2_f, q^1_g) \transedge{\prefix(w,i)} (r^2_f, r^2_g)$ (with $q_g \longedge{\outmorph(f)} q^1_g$). 
  Since $(x_1, x_2) \in \strongseppairs(r^{1,2}_g)$ (by definition of $\sspsequence$),  $(q^1_f, q^1_g) \nsim (q^2_f, q^1_g)$.
\end{proof}

The compositionality of this approach stems from the fact that for any
systems $S_f, S_g$ and respective states $q_f$ and $q_g$, an element of $\obsorder(q_f, q_g)$ can be computed using 
other elements from $\obsorder$ and $\strongseppairs$. 

\begin{definition}[Compositionality of deterministic observable effects]
  Let $q_f \in Q_f, q_g  \in Q_g$ be two states.
  Let $doe_f \in \obsorder(q_f)$ and $dsp_g \in \strongseppairs(q_g)$.
  If there exists a $t$ s.t. $doe_f[t] \in dsp_g[t+1]$ then there exists an element $doe^t_{g \circ f} \in \obsorder(q_f, q_g)$
  s.t.:
  $$
  doe^t_{g \circ f} = \star . \star^t . \mbox{\Huge $\oplus$} \{ doe_{g'}~|~ (q_f, q_g) \rightarrow^{t+1} (q'_f, q'_g), doe_{g'} \in \obsorder(q'_g) \}.
  $$
  This element is computed by merging the deterministic observable effects of states of $q_g$ reachable in $t+1$
  transitions. The initial $\star$ stems from the delay induced by communication in the synchronous model.
\end{definition}

A similar property holds for separating pairs. We only describe informally how to proceed,
since the general idea is similar to the case of deterministic observable effects.
A strongly separating pair of $q_f$ exists in $S_g \circ S_f$ if the
states $q^1_f, q^2_f$ reachable by this pair have a common observable effect (computable
using $\merge$) which corresponds to a strongly separating pair of $q_g$.

\section{Example}

As explained in the introduction, our work focuses on a real-time system called OASIS,
which provides a real-time kernel and a multi-agent synchronous-like language called
PsyC (an extension of C with synchronous primitives). We have given a formal semantics
of a simplification of PsyC called Psy-ALGOL.
We will use this semantics to highlight a common use case of our framework.

\subsection{Syntax and semantics of a simple synchronous language.}

In order to give the semantics of a program, we have to define how
its LTS is generated. We briefly survey a subset of the syntax of the language. 
The connection between the semantics and the resulting LTS should be 
straightforward, we will thus omit the derivations.

\subsubsection{Syntax.} The syntax definition is given in an inductive way using inference rules on
judgments of the shape  $\Gamma \vdash M : \sigma$, meaning ``in the context $\Gamma$,
the program $M$ has type $\sigma$''. A context is a list of the shape
$x_1 : \sigma_1, \ldots, x_n : \sigma_k$. It associates variables $x_i$ to their types $\sigma_i$.
For the sake of simplicity, we assume that all variables are declared beforehand and initialized
to their default value. We ignore procedures and we keep the other syntactical forms as simple as possible.
The types $\sigma$ and default values are defined as follows:
$$
\begin{array}{ll}
  \begin{array}{lll}
    \tau   & ::= & \inttype  ~|~ \booltype \\
    \sigma & ::= & \commtype ~|~ \vartype{\tau} ~|~ \exptype{\tau}
  \end{array} &
  \begin{array}{lll}
    default_\inttype & = & 0 \\
    default_\booltype & = & \Afalse \\
  \end{array}
\end{array}
$$

\begin{figure}
\begin{scriptsize}
$$
\begin{array}{lccccr}
  \inferrule*
  { }
  { \Askip : \commtype } &
  
  \inferrule*
  { }
  { \Gamma, x : \sigma \vdash x : \sigma } &

  \inferrule*
  { b \in \{ \Atrue, \Afalse \} }
  { \Gamma \vdash b : \boolexptype } &

  \inferrule*
  { n \in \natset }
  { \Gamma \vdash n : \intexptype } &

  \inferrule*
  { \Gamma \vdash V : \vartype{\tau} }
  { \Gamma \vdash !V : \exptype{\tau} } &

  \inferrule*
  { \Gamma \vdash V : \vartype{\tau} \\ \Gamma \vdash E : \exptype{\tau} }
  { \Gamma \vdash V := E : \commtype }

\end{array}
$$
$$
\begin{array}{lcr}

  \inferrule*
  { \Gamma \vdash A_0 : \exptype{\booltype} \\ \Gamma \vdash A_0 : \sigma \\ \Gamma \vdash A_1 : \sigma }
  { \Gamma \vdash \Acond{A_0}{A_1}{A_2} : \commtype } &

  \inferrule*
  { \Gamma \vdash A_0 : \commtype \\ \Gamma \vdash A_1 : \sigma }
  { \Gamma \vdash A_0; A_1 : \commtype } &

  \inferrule*
  { \Gamma \vdash A_0 : \exptype{\booltype} \\ \Gamma \vdash A_1 : \commtype }
  { \Gamma \vdash \Awhile{A_0}{A_1} : \commtype } \\

\end{array}
$$
$$
\begin{array}{cc}

  \inferrule*
  { \Gamma \vdash A_0 : \pi_0~Out \\ \ldots \\ \Gamma \vdash A_{|Out|-1} : \pi_{|Out|-1}~Out }
  { \Gamma \vdash \Atick{A_0 \ldots A_{|Out|-1}} : \commtype } &

  \inferrule*
  { }
  { \Gamma \vdash \Aget{i} : \pi_i~In \\ i \in \fin{|In|} }

\end{array}
$$
\end{scriptsize}
\caption{Definition of well-typed programs.}
\label{syntax}
\end{figure}
Assuming that the programs have input and output types $\langle In, Out \rangle$, the set $Prog$ of
correctly typed programs is defined in Fig. \ref{syntax}. We omit the arithmetical operators.

\subsubsection{Semantics.} \label{semantics} We will define the operational semantics for our 
language as a small-step relation. An operational semantics is usually
a kind of relation associating a program in its initial configuration to its
final outcome (be it a final value or divergence).

Our aim is slightly different: we want to view the evaluation of a program as a 
synchronous system.
This means that instead of producing a final value or diverging, we want to quantify
over all possible inputs at each logical step, and produce a LTS. In order to simplify matters,
our LTS will be given in unfolded form, as an infinitely deep tree.
Each step of the evaluation will grow this tree downward, and the limit of this process
will be the semantics of the program. Let's proceed to some definitions.

\begin{definition} The set of configurations is $Config \define Store(V) \times In \times Prog$,
  where $V$ is the set of variables of the program and $Store(V)$ is a mapping from variables to
  constants.
\end{definition}

\begin{definition} The sets $Trees$ of finite (resp. infinite) partial evaluation trees are generated by the inductive 
(resp. co-inductive) interpretation of the following rules.
$$
\begin{array}{cc}
  \inferrule[Leaf]
  { conf \in Config  }
  { conf \in Trees } &
  \inferrule[Node]
  { out \in Out \\ \forall in, tr_{in} \in Trees }
  { (out, \{ (in, tr_{in}) ~|~ in \in In \}) }
\end{array}
$$
Let $T$ be a partial evaluation tree with leaves $(st_i, in_i, prog_i)$.
Given a syntactical mapping $\mapfunc : Prog \rightarrow Prog$, we note $\mapfunc \downarrow T$ the extension
of $\mapfunc$ to the leaves of $T$ such that $(\mapfunc \downarrow T)$ has leaves $(st_i, in_i, \mapfunc(prog_i))$.

\end{definition}

The one-step reduction relation $\onestep ~\subseteq FiniteTrees \times FiniteTrees$
is defined in terms of a relation $\preonestep ~\subseteq Config \times FiniteTrees$
defined by the rules in Fig.~\ref{pre_onestep}.

\begin{figure}[h]
\begin{scriptsize}
$$
\begin{array}{c}
\begin{array}{lll}
  \inferrule[Const]
  { }
  { (st, in, c) \preonestep (st, in, c) \\ c \in \values } &

  \inferrule[Var]
  { }
  { (st, in, x) \preonestep (st, in, x) \\ x \in dom(st) } &

  \inferrule[Seq-context]
  { (st, in, A_0) \preonestep T }
  { (st, in, A_0; A_1) \preonestep (\Abs{A'_0} A'_0; A_1) \downarrow T } \\ \\

  \inferrule[Seq-skip]
  {  }
  { (st, in, \Askip; A_1) \preonestep (st, in, A_1) } &

  \inferrule[Deref-context]
  { (st, in, A) \preonestep T }
  { (st, in, !A) \preonestep (\Abs{A'} !A') \downarrow T } &

  \inferrule[Deref-var]
  { }
  { (st, in, !x) \preonestep (st, in, st(x)) } \\ \\

  \inferrule[Assign-context]
  { (st, in, A) \preonestep T }
  { (st, in, x := A) \preonestep (\Abs{A'} x := A') \downarrow T } &

  \inferrule[Assign-const]
  { }
  { (st, in, x := v) \preonestep (st | x \mapsto v, in, \Askip) } &

  \inferrule[Get]
  { }
  { (st, in, \Aget{i}) \preonestep (st, in, \pi_i~in) }

\end{array} \\ \\

\begin{array}{c}
  \inferrule[If-cond-context]
  { (st, in, A_0) \preonestep T }
  { (st, in, \Acond{A_0}{A_1}{A_2}) \preonestep (\Abs{A'_0} \Acond{A'_0}{A_1}{A_2}) \downarrow T }
\end{array} \\ \\

\begin{array}{ll}
  \inferrule[If-true]
  { }
  { (st, in, \Acond{\Atrue}{A_1}{A_2}) \preonestep (st, in, A_1) } &

  \inferrule[If-false]
  { }
  { (st, in, \Acond{\Afalse}{A_1}{A_2}) \preonestep (st, in, A_2) } \\ \\
\end{array} \\

\begin{array}{c}
  \inferrule[While-unfold]
  { }
  { (st, in, \Awhile{A_0}{A_1}) \preonestep (st, in, \Acond{A_0}{A_1; \Awhile{A_0}{A_1}}{\Askip}) } \\ \\
\end{array} \\

\inferrule[Tick-context-i]
{ (st, in, A_i) \preonestep T }
{ (st, in, \Atick{c_0~\ldots~c_{i-1}, A_i~\ldots A_{|Out|-1}}) \preonestep (\Abs{A'_i} \Atick{c_0~\ldots~c_{i-1}, A'_i~\ldots A_{|Out|-1}}) \downarrow T } \\ \\

\inferrule[Tick-constant]
{ }
{ (st, in, \Atick{c_0~\ldots c_{|Out|-1}}) \preonestep ((c_0~\ldots c_{|Out|-1}), \{ (input, (st, input, \Askip)) ~|~ input \in In \}) }

\end{array}
$$
\end{scriptsize}
\caption{One-step reduction relation}
\label{pre_onestep}
\end{figure}

We define $\onestep$ as the application of $\preonestep$ to the leaves of a tree.
From there, we can define the standard reflexive-transitive closure of $\onestep$
and its co-inductive counterpart as in \cite{LeroyBigStep}. 

\subsection{Example of synchronous programs.}

\begin{figure}
  \centering
  \subfigure[Programs 1 and 2]{
    \begin{tabular}{ll}
$\begin{array}{l}
    \textsc{Program 1} \\
    x := \Afalse; \\
    \textbf{while}~\Atrue~\textbf{do} \\
    \hspace{5mm} \redt{\Atick{!x}}; \redt{\Leftarrow}\\
    \hspace{5mm} x~:=~\Agetn; \\
    \hspace{5mm} \textbf{while}~\Agetn~\textbf{do} \\
    \hspace{10mm}   \Atick{\Afalse} \\
    \hspace{5mm}\textbf{done}; \\
    \textbf{done}
  \end{array}$ &
$\begin{array}{l}
    \textsc{Program 2} \\
    x := \Afalse; \\
    y := N; \\
    \textbf{while}~\Atrue~\textbf{do} \\
    \hspace{5mm} \redt{\Atick{!x}}; \redt{\Leftarrow} \\
    \hspace{5mm} x~:=~\Agetn; \\
    \hspace{5mm} y~:=~N; \\
    \hspace{5mm} \textbf{while}~\Agetn~\wedge~!y~\neq 0~\textbf{do} \\
    \hspace{10mm}   y := !y - 1; \\
    \hspace{10mm}   \Atick{\Afalse} \\
    \hspace{5mm}\textbf{done}; \\
    \textbf{done}
  \end{array}$
    \end{tabular}
  }
  \subfigure[LTS of program 1]{
    \includegraphics[width=13.0cm]{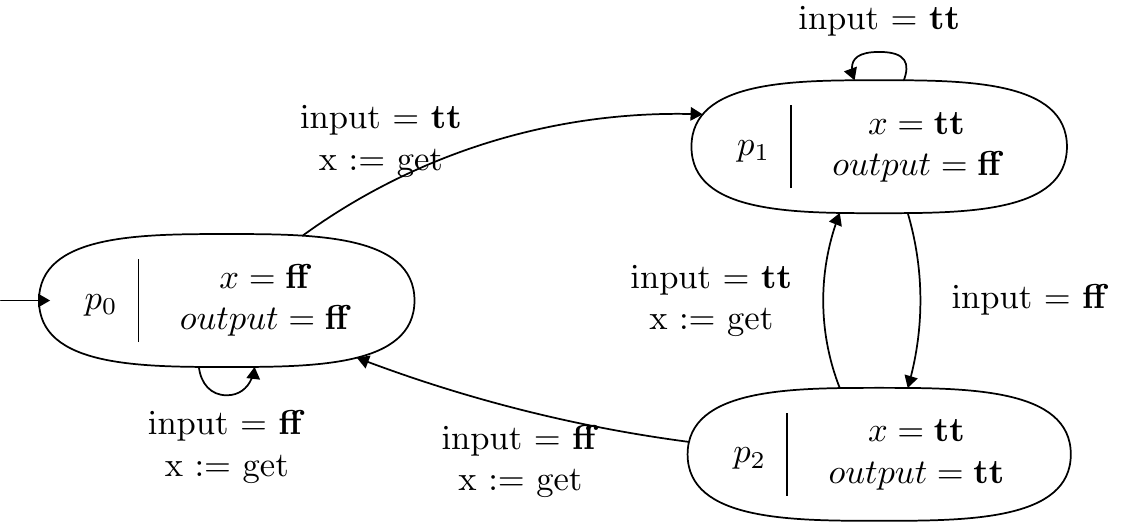}
    \label{fig:LTS:prog1}
  }
  \subfigure[LTS of program 2]{
    \includegraphics[width=15.0cm]{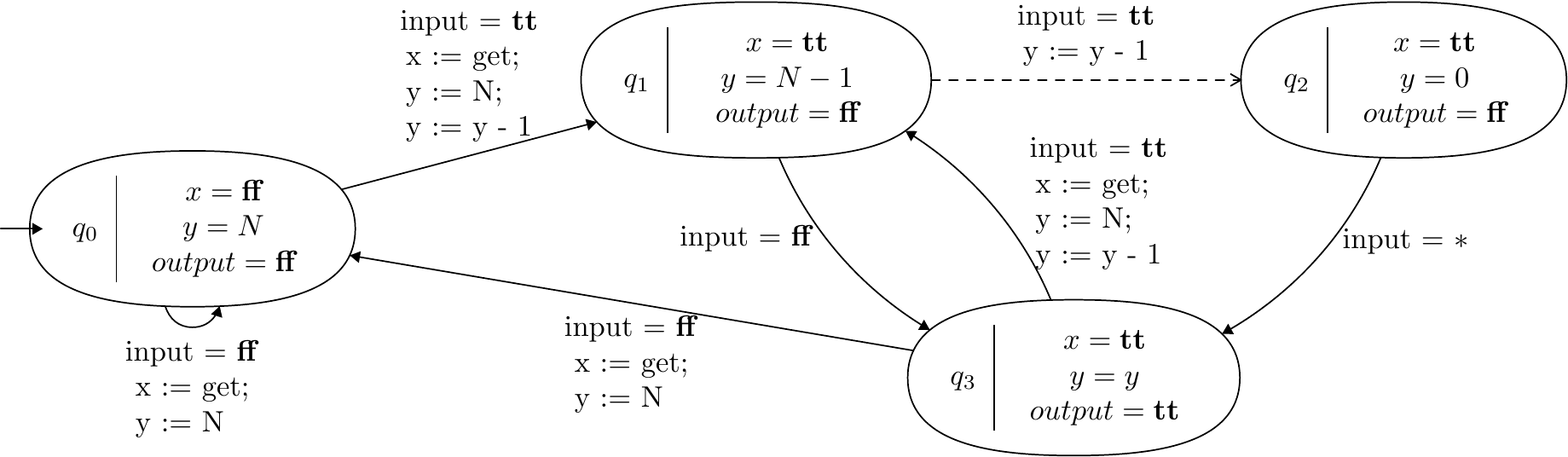}
    \label{fig:LTS:prog2}
  }
  \caption{LTS of programs 1 and 2}
  \label{fig:LTS}
\end{figure}

We will study the behavior of two programs, whose texts and associated (deterministic) LTS are displayed in Fig. \ref{fig:LTS}.
The first program captures an input data at the beginning of the outer loop and releases it on output when the inner loop
finishes executing itself. The second program proceeds similarly, except that the inner loop termination is ensured by
the usage of a decreasing counter $y$ initialized to a constant $N$. In the LTS of program 2, this corresponds to the
dashed transition between $q_1$ and $q_2$, which should be understood as $N - 2$ omitted states with decreasing values
of $y$.

We want to check whether the data inputted at the $\Atick{!x};$ lines highlighted in both programs
yield a finite reaction time. These instructions corresponds to states $p_0, p_2$ in LTS 1, and $q_0, q_3$ in LTS 2.
Thus, in order for these instructions to be ``reactive'', the corresponding states must have a finite reaction time.
In order to study this, we will compute their deterministic observable effects.

\paragraph{Program 1.} The set of separating pairs of $p_0$ and $p_2$ is $\separatingpairs(p_{0,2}) = \{ (\Atrue, \Afalse) \}$.
This fact is proved by the transitions $p_{0} \longedge{\Atrue} p_1$, $p_{0} \longedge{\Afalse} p_0$,
and $p_2 \longedge{\Atrue} p_1$, $p_2 \longedge{\Afalse} p_0$ where $p_0 \nsim p_1$.  The only separator of $(p_0, p_1)$ is
the one-symbol word $w = \Afalse$. This fact is proved by the transitions $p_0 \longedge{\Afalse} p_0$ and $p_1 \longedge{\Afalse} p_2$,
where $\outmorph(p_0) \neq \outmorph(p_2)$. The separator $w$ is \emph{deterministic}, since the underlying automaton is itself
deterministic. This separator induces a pair of output words $(o_1, o_2) = (\Afalse . \Afalse, \Afalse . \Atrue)$ and thus an observable
effect $\differences_{p_0,p_1}(w, 1) = (\Afalse, \Atrue)$. However, this observable effect is \textbf{not} deterministic, since there
exist an infinite input word $\Atrue^\omega$ which generates no observable effect. Thus, the sequence of observable
effects of $p_0$ and $p_2$ is $\star^\omega$ and the reaction time for the highlighted line instruction does not exist (or, equivalently, is infinite).

\paragraph{Program 2.} The set of separating pairs of $q_0$ and $q_3$ is still 
$\separatingpairs(q_{0,3}) = \{ (\Atrue, \Afalse) \}$. The corresponding transitions are 
$q_0 \longedge{\Atrue} q_1$, $q_0 \longedge{\Afalse} q_0$ and 
$q_3 \longedge{\Atrue} q_1$, $q_3 \longedge{\Afalse} q_0$ with $q_0 \nsim q_1$. 
The (deterministic) separators of $(q_0, q_1)$ are 
$\nondetseparators(q_0, q_1) = \detseparators(q_0, q_1) = \{ \textbf{ff}; ~ \textbf{tt} . \textbf{ff}; ~ \textbf{tt} .  \textbf{tt} . \textbf{ff}; ~ \ldots; ~ \textbf{tt}^{N-2} . \textbf{ff};~
\textbf{tt}^{N-1} . \booltype \}$. The table below lists the observable differences associated to each separator.

$$
\begin{array}{lll}
  \Afalse & \mapsto & \star . (\Afalse, \Atrue) . \star^\omega \\
  \textbf{tt} . \textbf{ff} & \mapsto & \star . \star . (\Afalse, \Atrue) . \star^\omega \\
  & \ldots & \\
  \textbf{tt}^{N-2} . \textbf{ff} & \mapsto & \star^{N-1} . (\Afalse, \Atrue) . \star^\omega \\
  \textbf{tt}^{N-1} . \booltype   & \mapsto & \star^N . (\Afalse, \Atrue) . \star^\omega \\
\end{array}
$$

When merging these observable differences, we obtain $\detobs(q_{0,3}) = \star^\omega$. This means that even though
the program 2 is reactive with a finite reaction time, it is still non-compositional within our simple
framework. This is due to the fact that the observable effects occurrence time are non-uniform w.r.t. inputs, i.e. 
non-constant.

\section{Conclusions and future works}

We have formalized in this paper the notions of functional dependency and reaction time 
for some synchronous systems. These notion are adapted to the formal investigation of reaction time constraints 
for the aforementioned synchronous systems. Functional dependencies were shown to be brittle and not 
suited to composition and verification. To answer this problem, we proposed an approximated method 
which gains compositionality by restricting its scope to deterministic separators.

Our work opens some other research directions: a broader investigation of the notion of reaction 
time in a more general setting \cite{Haghverdi2005} could prove fruitful and lead to simpler, 
more abstract and general definitions. Our composition operators are quite restricted, as shown
in the example, but making it more flexible should be possible by making deterministic effects
a function of some arbitrary decidable specification. It seems also possible to apply our ideas to the
refinement-based development of systems. 

We provide a formal framework allowing to reason on functional dependencies and reaction time
which is amenable to automated verification. This effort should help the software designer and 
programmer to deliver reliable, predictable and efficient systems.

\bibliographystyle{eptcs}
\bibliography{ice}

\end{document}